\newtheorem{theorem}{Theorem}
\newtheorem{lemma}[theorem]{Lemma}
 \newtheorem{corollary}[theorem]{Corollary}
 \def\section{\@startsection {section}{1}{\z@}{-1.5ex plus -.5ex
 minus -.2ex}{1ex plus .2ex}{\large\bf}}
\newcommand\blfootnote[1]{%
  \begingroup
  \renewcommand\thefootnote{}\footnote{#1}%
  \addtocounter{footnote}{-1}%
  \endgroup
}
\begin{document}
\title{Some constructions of quantum MDS codes}
\author{Simeon Ball }
 \date{}
\maketitle

\begin{abstract}

We construct quantum MDS codes with parameters $ [\![ q^2+1,q^2+3-2d,d ]\!] _q$ for all $d \leqslant q+1$, $d \neq q$. These codes are shown to exist by proving that there are classical generalised Reed-Solomon codes which contain their Hermitian dual. These constructions include many constructions which were previously known but in some cases these codes are new. We go on to prove that if $d\geqslant q+2$ then there is no generalised Reed-Solomon $[n,n-d+1,d]_{q^2}$ code which contains its Hermitian dual. We also construct an $ [\![ 18,0,10 ]\!] _5$ quantum MDS code, an $ [\![ 18,0,10 ]\!] _7$ quantum MDS code and a $ [\![ 14,0,8 ]\!] _5$ quantum MDS code, which are the first quantum MDS codes discovered for which $d \geqslant q+3$, apart from the $ [\![ 10,0,6 ]\!] _3$ quantum MDS code derived from Glynn's code. 
\blfootnote{14 January 2021. The author acknowledges the support of the project MTM2017-82166-P of the Spanish {\em Ministerio de Ciencia e Innovaci\'on.}} 
\end{abstract}

\section{Introduction}

To be able to store information on a quantum system and process that information, it is essential to use a quantum error-correcting code. In this way any small perturbation of the system can be corrected and the information restored. As with classical error-correcting codes, if there is no limit to the size of the alphabet (the local dimension of the quantum particles in the case of a quantum error-correcting code) then we can employ optimal codes which meet the Singleton bound (the quantum Singleton bound for quantum error-correcting codes) which are called maximum distance separable (MDS) codes. Nearly all quantum error-correcting MDS codes known are constructed by means of the more general construction of employing a classical linear code over ${\mathbb F}_{q^2}$ which contains its Hermitian dual. In most cases, this has been done by proving that there are generalised (possibly truncated) Reed-Solomon codes which contain their Hermitian dual. In this article we prove that if the minimum distance $d$ satisfies $d \leqslant q+1$ and $d \neq q$, then there are generalised Reed-Solomon codes of length $q^2+1$ which contain their Hermitian dual. This was not previously known when both $d$ and $q$ are even. More surprisingly, we go on to prove that if $d\geqslant q+2$ then there are no generalised Reed-Solomon codes of any length which contain their Hermitian dual. However, we do provide some sporadic examples of quantum MDS codes for which $d \geqslant q+3$.

\section{Linear and quantum error-correcting codes}

Let $A$ be a finite set. A {\em code} $C$ over $A$ of length $n$ is a subset of $A^n$. The  elements of $C$ are called {\em codewords}. The minimum distance $d$ is the minimum number of coordinates in which two codewords differ. The {\em Singleton bound} states that
$$
|C| \leqslant |A|^{n-d+1}.
$$
A code for which $|C|=|A|^{n-d+1}$ is called a {\em maximum distance separable} code or simply an MDS code.

Let ${\mathbb F}_q$ denote the finite field with $q$ elements, where $q$ is the power of some prime $p$. The {\em weight} of a vector of ${\mathbb F}_q^n$ is the number of non-zero coordinates that it has.

A $k$-dimensional {\em linear code} $C$ of length $n$ is a $k$-dimensional subspace of ${\mathbb F}_q^n$.  We will denote such a code with minimum distance $d$ as an $[n,k,d]_q$ code.

Let $\alpha$ be a Hermitian form defined on ${\mathbb F}_{q^2}^n \times {\mathbb F}_{q^2}^n $ by 
$$
\alpha(u,v)=u_1v_1^{q}+\cdots+u_nv_n^{q}.
$$
The Hermitian dual of a linear code $C$ over ${\mathbb F}_{q^2}$ is
$$
C^{\perp_h}=\{ v \in {\mathbb F}_q^n \ | \ \alpha(u,v)=0,\ \mathrm{for} \ \mathrm{all} \ u \in C \}.
$$

In this article we will construct linear MDS codes $C$ over ${\mathbb F}_{q^2}$ for which $C^{\perp_h} \leqslant C$ and from these we will construct previously unknown quantum MDS codes. 

The following lemma is from \cite[Proposition 2.1]{CMT2018}.

\begin{lemma} \label{hermitedual}
$C$ is a linear $[n,k,n-k+1]_{q^2}$ MDS code if and only if $C^{\perp_h}$ is a linear $[n,n-k,k+1]_{q^2}$ MDS code. Moreover, $(C^{\perp_h})^{\perp_h}= C$.
\end{lemma}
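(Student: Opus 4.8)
The plan is to reduce the Hermitian statement to the classical Euclidean one by means of the Frobenius twist. For $v\in{\mathbb F}_{q^2}^n$ write $v^{(q)}=(v_1^q,\dots,v_n^q)$ and, for a subspace $D\leqslant{\mathbb F}_{q^2}^n$, put $D^{(q)}=\{d^{(q)}\colon d\in D\}$; let $\langle u,v\rangle=u_1v_1+\cdots+u_nv_n$ be the standard symmetric bilinear form and $D^{\perp}$ the associated Euclidean dual. The first step is the identity $\alpha(u,v)=\langle u,v^{(q)}\rangle$, from which $v\in C^{\perp_h}$ if and only if $v^{(q)}\in C^{\perp}$; since $v^{(q^2)}=v$ on ${\mathbb F}_{q^2}^n$ this reads $C^{\perp_h}=(C^{\perp})^{(q)}$.

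Next I would record that the coordinatewise $q$-th power is a Hamming-weight-preserving, ${\mathbb F}_q$-semilinear bijection of ${\mathbb F}_{q^2}^n$: it satisfies $(\lambda u+v)^{(q)}=\lambda^q u^{(q)}+v^{(q)}$ because $x\mapsto x^q$ is a field automorphism of ${\mathbb F}_{q^2}$ of order two. Consequently, for any linear code $D$ the set $D^{(q)}$ is again a linear code of the same length, the same dimension and the same minimum distance. In particular $C^{\perp_h}=(C^{\perp})^{(q)}$ has exactly the same parameters as $C^{\perp}$, so it suffices to prove the equivalence with $\perp_h$ replaced throughout by $\perp$.

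For the Euclidean version I would argue in the standard way. Let $H$ be a parity-check matrix of $C$, that is, a generator matrix of the $[n,n-k]_{q^2}$ code $C^{\perp}$, and recall $(C^{\perp})^{\perp}=C$. A codeword of $C$ of weight $w$ corresponds to a linear dependence among $w$ columns of $H$, so $C$ has minimum distance at least $n-k+1$ — equivalently, by the Singleton bound, $C$ is MDS — if and only if every $n-k$ columns of $H$ are linearly independent, i.e. every maximal square submatrix of a generator matrix of $C^{\perp}$ is nonsingular. Applying this same criterion to $C^{\perp}$ shows the latter condition is precisely the assertion that $C^{\perp}$ is an $[n,n-k,k+1]_{q^2}$ MDS code; as the condition is symmetric under $C\leftrightarrow C^{\perp}$, this gives the equivalence together with the stated parameters.

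Finally, for $(C^{\perp_h})^{\perp_h}=C$ I would note that the twist commutes with the Euclidean dual: since $x\mapsto x^q$ is additive and multiplicative, $\langle u^{(q)},v^{(q)}\rangle=\langle u,v\rangle^q$, whence $(D^{(q)})^{\perp}=(D^{\perp})^{(q)}$. Combined with $(C^{\perp})^{\perp}=C$ this yields $(C^{\perp_h})^{\perp_h}=\bigl((C^{\perp_h})^{\perp}\bigr)^{(q)}=\bigl(((C^{\perp})^{(q)})^{\perp}\bigr)^{(q)}=\bigl(((C^{\perp})^{\perp})^{(q)}\bigr)^{(q)}=(C^{(q)})^{(q)}=C^{(q^2)}=C$. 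There is no substantial obstacle in any of this; the only point that needs care is that $D\mapsto D^{(q)}$ is semilinear rather than linear, so one must invoke the order-two automorphism property both to see that the twist preserves code parameters and to keep track of the compositions in the double-dual identity.
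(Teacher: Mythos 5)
Your argument is correct, but there is nothing in the paper to compare it against: the paper states this lemma without proof, quoting it directly from \cite[Proposition 2.1]{CMT2018}. Your reduction of the Hermitian statement to the Euclidean one via the coordinatewise Frobenius twist $v\mapsto v^{(q)}$ — using that the twist is a weight- and dimension-preserving semilinear bijection with $C^{\perp_h}=(C^{\perp})^{(q)}$ and $(D^{(q)})^{\perp}=(D^{\perp})^{(q)}$, together with the standard parity-check/generator-matrix characterisation of MDS codes and $(C^{\perp})^{\perp}=C$ — is the standard route and is complete; it would serve as a self-contained proof if one did not wish to rely on the citation.
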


A quantum code on $n$ subsystems is a $K$-dimensional subspace of $({\mathbb C}^q)^{\otimes n}$. A code with minimum distance $d$ is able to detect errors, which act non-trivially on the code space, on up to $d-1$ of the subsystems and correct errors on up to $\frac{1}{2}(d-1)$ of the subsystems. If the dimension $K=q^k$ for some $k$ then we say the quantum code is an $[\![ n,k,d ]\!] _q$ code and if not simply an $ (\!( n,K,d)\!)_q$ code. 

To be able to describe in more detail the quantum error-correcting codes we shall construct here, we have to discuss the type of errors we wish to correct. Let $\{ \ket{x} \ | \ x \in {\mathbb F}_q\}$ be a basis of ${\mathbb C}^q$. We define the following set of endomorphisms of ${\mathbb C}^q$ called the {\em generalised Pauli operators}. For each $a,b\in {\mathbb F}_q$, we define $X(a)$ by its action on the basis vectors, $X(a)\ket{x}=\ket{x+a}$, and likewise $Z(b)$ by $Z(b)\ket{x}=e^{2\pi i \mathrm{tr}(bx)/p} \ket{x}$, where $\mathrm{tr}$ denotes the usual trace map from ${\mathbb F}_q$ to ${\mathbb F}_p$. The generalised Pauli operators are of the form $X(a)Z(b)$, for some $a,b \in {\mathbb F}_q$. In the error model, the (Pauli) errors on $({\mathbb C}^q)^{\otimes n}$ are tensor products of generalised Pauli operators. An error has weight $t$ if precisely $t$ of the components in the tensor product are not the identity operator, whilst the remaining $n-t$ are the identity operator.  A quantum error-correcting code of minimum distance $d$ is able to correct all Pauli errors of weight at most $\frac{1}{2}(d-1)$ which act non-trivially on the code subspace. Such quantum error-correcting codes are most commonly constructed by taking the joint eigenspace of eigenvalue $1$ of a subgroup of Pauli operators. These codes are called {\em stabiliser} codes. It turns out that stabiliser codes are equivalent to certain classical codes which are additive over ${\mathbb F}_{q^2}$. One construction of a particular subset of these codes is that employed in Theorem~\ref{sigmaortog}. Thus the quantum codes which will construct here are stabiliser codes. We refer to \cite{KKKS2006} for a more detailed discussion of stabiliser codes.

We rely on the following theorem from \cite[Corollary 19]{KKKS2006}.

\begin{theorem} \label{sigmaortog}
If there is an $[n,n-k,d]_{q^2}$ linear code $C$ such that $C^{\perp_h} \leqslant C$ then there exists an $ [\![ n,n-2k,d' ]\!] _q$ quantum code with $d' \geqslant d$.
\end{theorem}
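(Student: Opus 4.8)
The plan is to reduce the assertion to the standard correspondence between symplectic self-orthogonal classical codes over ${\mathbb F}_q$ and stabiliser codes; the only genuinely new ingredient is recasting the Hermitian condition $C^{\perp_h}\leqslant C$ as a symplectic one. First I would fix $\beta\in{\mathbb F}_{q^2}$ with $\beta^q=-\beta$ (such $\beta$ exists for every $q$; in characteristic $2$ one may take $\beta=1$) and, viewing ${\mathbb F}_{q^2}^n$ as a $2n$-dimensional ${\mathbb F}_q$-space via a fixed ${\mathbb F}_q$-basis of ${\mathbb F}_{q^2}$, define the ${\mathbb F}_q$-bilinear form
$$
\langle u,v\rangle=\mathrm{tr}_{q^2/q}\big(\beta\,\alpha(u,v)\big).
$$
Using $\alpha(v,u)^q=\alpha(u,v)$ and the Frobenius-invariance of $\mathrm{tr}_{q^2/q}$ one checks that $\langle\,,\rangle$ is alternating, and non-degeneracy follows from that of $\alpha$ together with $\beta\neq0$; hence $\langle\,,\rangle$ is a symplectic form on ${\mathbb F}_q^{2n}$. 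Note also that under this identification the symplectic weight of a vector (the number of coordinate pairs that are not both zero) equals the Hamming weight of the corresponding vector of ${\mathbb F}_{q^2}^n$.

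Next I would identify the symplectic and Hermitian duals. For an ${\mathbb F}_{q^2}$-linear code $E$, write $E^{\perp_s}$ for its dual with respect to $\langle\,,\rangle$ inside ${\mathbb F}_q^{2n}$. The implication $\alpha(u,v)=0\Rightarrow\langle u,v\rangle=0$ gives $E^{\perp_h}\subseteq E^{\perp_s}$, and the equality of ${\mathbb F}_q$-dimensions $\dim_{{\mathbb F}_q}E^{\perp_h}=2\big(n-\dim_{{\mathbb F}_{q^2}}E\big)=\dim_{{\mathbb F}_q}E^{\perp_s}$ forces $E^{\perp_h}=E^{\perp_s}$. Applying this to $E=C$ and to $E=C^{\perp_h}$, and using $(C^{\perp_h})^{\perp_h}=C$ from Lemma~\ref{hermitedual}, the hypothesis $C^{\perp_h}\leqslant C$ translates into: the ${\mathbb F}_q$-linear code $D:=C^{\perp_h}$, of ${\mathbb F}_q$-dimension $2k$ inside ${\mathbb F}_q^{2n}$, satisfies $D\subseteq D^{\perp_s}$ with $D^{\perp_s}=C$; that is, $D$ is symplectic self-orthogonal.

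Finally I would invoke the standard construction of stabiliser codes: an ${\mathbb F}_q$-linear symplectic self-orthogonal code $D\subseteq{\mathbb F}_q^{2n}$ of dimension $2k$ yields a stabiliser code of dimension $q^{n-2k}$, that is an $[\![ n,n-2k,d' ]\!]_q$ code, whose minimum distance $d'$ equals the minimum symplectic weight of $D^{\perp_s}\setminus D$ (or of $D^{\perp_s}\setminus\{0\}$ in the case $D=D^{\perp_s}$). By the weight correspondence from the first step this equals the minimum Hamming weight of $C\setminus C^{\perp_h}$, which is at least the minimum weight of $C\setminus\{0\}$, namely $d$; hence $d'\geqslant d$, as required.

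The step I expect to be the real obstacle --- if one is not permitted simply to cite it --- is the last one: proving from first principles that an abelian subgroup of the group generated by the generalised Pauli operators $X(a),Z(b)$, equivalently a symplectic self-orthogonal code under the trace-symplectic form, defines a quantum code of the stated dimension whose minimum distance is governed by $D^{\perp_s}\setminus D$. This requires describing the Pauli group modulo its centre and its identification with $({\mathbb F}_q^{2n},\langle\,,\rangle)$, and then verifying the Knill--Laflamme error-correction conditions on the relevant joint eigenspace. Everything preceding it --- the choice of $\beta$, the verification that $\langle\,,\rangle$ is a symplectic form with the right weight function, and the identification $C^{\perp_h}=C^{\perp_s}$ --- is routine linear algebra over ${\mathbb F}_q$.
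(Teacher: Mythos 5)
The paper does not prove this statement at all --- it simply cites \cite[Corollary 19]{KKKS2006} --- so there is no internal proof to compare against. Your reduction via the trace-alternating form $\mathrm{tr}_{q^2/q}(\beta\,\alpha(u,v))$, the identification $C^{\perp_h}=C^{\perp_s}$ by a dimension count, and the weight correspondence are all correct and constitute precisely the standard argument underlying the cited result; the one step you flag as needing citation (that a symplectic self-orthogonal code yields a stabiliser code with distance governed by $D^{\perp_s}\setminus D$) is exactly what \cite{KKKS2006} supplies.
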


The quantum Singleton bound states that for an $ [\![ n,k,d ]\!] _q$ quantum code, $k \leqslant n-2d+2$. A code reaching this bound is called a {\em quantum MDS code}. The quantum Singleton bound implies that for an $ [\![ n,n-2k,d' ]\!] _q$ quantum code, $d' \leqslant k+1$. Thus, Theorem~\ref{sigmaortog} implies for MDS codes that $d'=d$. Hence, we have the following, see \cite[Corollary 3.2]{JLLX2010}.

\begin{theorem} \label{hermort}
If there is an $[n,n-k,k+1]_{q^2}$ linear MDS code $C$ such that $C^{\perp_h} \leqslant C$ then there is an $ [\![ n, n-2k, k+1 ]\!] _q$ quantum MDS code.
\end{theorem}

In the next section we will give a simple construction of a $[q^2+1,k,q^2+2-k]_{q^2}$ MDS codes which is contained in its Hermitian dual, for all $k \leqslant q$, $k \neq q-1$. This code is a generalised Reed-Solomon code. The Hermitian dual of this code is a $[q^2+1,q^2+1-k,k+1]_{q^2}$ generalised Reed-Solomon code which contains its Hermitian dual. Thus, Theorem~\ref{hermort} implies the existence of $ [\![ q^2+1,q^2+1-2k,k+1 ]\!] _q$ quantum MDS codes for all $k \leqslant q$, $k \neq q-1$.  

In Huber and Grassl \cite{HG2019}, we find the following bounds on the existence of an $[\![ n,k,d ]\!] _q$ quantum MDS code for $q=5$. If $n+k=14$ then $4 \leqslant d \leqslant 8$. The lower bound comes from a construction of a $[10,3,8]_{25}$ linear MDS code which is contained in its Hermitian dual. The upper bound would be attained by a $[14,7,8]_{25}$ linear MDS code which is equal to its Hermitian dual. We will construct such a code here. If $n+k=18$ then we have $5 \leqslant d \leqslant 10$. The upper bound would be attained if there is a $[18,9,10]_{25}$ linear MDS code which is equal to its Hermitian dual. Again, we will construct such a code here.

\section{A construction of quantum MDS codes based on generalised Reed-Solomon codes}

There are many constructions of quantum MDS codes with $d \leqslant q+1$, mostly based on cyclic or constacyclic constructions and generalised Reed-Solomon codes. For example those contained in \cite{CLZ2015,FF2018,FF2018b}, \cite{GR2015,HXC2016}, \cite{JKW2017}, \cite{JX2014,JLLX2010,KZ2012,KZL2014}, \cite{LX2010,LXW2008},
\cite{SYC2018,SYW2019,SYZ2017} and \cite{WZ2015,Yan2019,ZC2014,ZG2017}.
 
Here we give a short proof that for all $k \leqslant q$, $k \neq q-1$, there are $ [\![ q^2+1,q^2+1-2k,k+1 ]\!] _q$ quantum MDS codes arising from generalised Reed-Solomon codes of length $q^2+1$. This extends the results of Jin et al \cite{JLLX2010}, who proved the existence for  $k \leqslant \frac{1}{2}q$ and Grassl and R\"otteler \cite{GR2015} who proved the existence for $k \leqslant q$ unless $q$ is even and $k$ is odd. Grassl and R\"otteler \cite{GR2015} also prove the existence for $k=q-1$ and $q\in \{8,16,32,64,128\}$, cases which are not covered by Theorem~\ref{GRSconstr}.
Note that a similar result to Theorem~\ref{GRSconstr} is claimed in \cite[Theorem 4.6]{JX2014} but this relies on the erroneous \cite[Corollary 3.2, part (ii)]{JX2014}. Indeed, the hypothesis of \cite[Theorem 4.6]{JX2014} includes the case $k=q-1$ and $q=4$. The proof involves showing that there is a generalised Reed-Solomon code with parameters $[17,3,15]_{16}$ contained in its Hermitian dual. However, a quick computer search reveals that no such code exists. In fact, computer searches reveal that neither the $[18,3,16]_{16}$ coming from a regular or a Lunelli-Sce hyperoval of PG$(2,16)$ (where PG$(k-1,q)$ denotes the $(k-1)$-dimensional projective space over ${\mathbb F}_q$) can be truncated to a $[17,3,15]_{16}$ code which is contained in its Hermitian dual. In other words, the puncture code has no codewords of weight $17$, or in fact any odd weight. We refer to \cite{GR2015} for more on constructions of quantum MDS codes via the puncture code. The only other complete arc of PG$(2,16)$ of size at least $13$ is the Kestenband arc (from \cite{Kestenband1981}) constructed as the intersection of two Hermitian curves. Again, by employing a computer search, we discover that any $[13,3,11]_{16}$ code obtained from this Kestenband arc is not contained in its Hermitian dual. These are the only complete arcs in PG$(2,16)$ of size at least 13 (see for example \cite{HS1998}), so all linear $[n,3,n-2]_{16}$ MDS codes with $n \geqslant 13$ come from one of these examples. We conclude that there is no way to construct an $ [\![ n,n-6,4]\!] _4$ quantum MDS code, for $n \in \{13,15,17\}$, from a linear $[n,3,n-2]_{16}$ MDS code contained in its Hermitian dual. To clarify, any such code must be linearly equivalent to (the truncation of) either a Reed Solomon code, a Kestenband arc or the Lunelli-Sce hyperoval and none of these examples can be truncated to give such a code.

\begin{theorem} \label{GRSconstr}
For $q$ a prime power, there exists a $ [\![ q^2+1,q^2+1-2k,k+1 ]\!] _q$ quantum MDS code for all $k \leqslant q$ where $k \neq q-1$.
\end{theorem}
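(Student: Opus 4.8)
By Theorem~\ref{hermort}, it suffices to exhibit, for each admissible $k$, a linear $[q^2+1,q^2+1-k,k+1]_{q^2}$ MDS code $C$ with $C^{\perp_h}\leqslant C$. By Lemma~\ref{hermitedual} this is equivalent to producing a $[q^2+1,k,q^2+2-k]_{q^2}$ MDS code $D$ (namely $D=C^{\perp_h}$) with $D\leqslant D^{\perp_h}$, i.e.\ a \emph{Hermitian self-orthogonal} generalised Reed--Solomon code of dimension $k$ and length $q^2+1$. So the target becomes: for all $k\leqslant q$, $k\neq q-1$, construct a Hermitian self-orthogonal GRS code of length $q^2+1$ and dimension $k$ over $\mathbb F_{q^2}$.

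The natural candidate is a GRS code evaluated at the $q^2+1$ points $\{\infty\}\cup\mathbb F_{q^2}$, i.e.\ the code spanned by the rows $(v_a\, a^{i})_{a}$ together with a suitable coordinate ``at infinity,'' for $i=0,\dots,k-1$, where $v=(v_a)$ is a vector of nonzero ``column multipliers.'' The Hermitian inner product of the degree-$i$ and degree-$j$ generators is, up to the point at infinity, a sum of the form $\sum_{a\in\mathbb F_{q^2}} v_a^{q+1} a^{i} a^{qj}$, since $(a^j)^q=a^{qj}$. One then uses the standard power-sum identity: $\sum_{a\in\mathbb F_{q^2}} a^{m}$ equals $-1$ when $(q^2-1)\mid m$ with $m>0$, and $0$ otherwise. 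The first step is therefore to choose $v_a$ — the obvious guess is $v_a^{q+1}=1$, i.e.\ $v_a$ a $(q+1)$-th root of unity, available since $(q+1)\mid(q^2-1)$, but one may need $v_a$ to depend on $a$ (a Dickson-type choice such as a scalar times a fixed function) to kill the surviving terms. I would compute $i+qj$ for $0\leqslant i,j\leqslant k-1$ and check that $(q^2-1)\mid(i+qj)$ with $i+qj>0$ never happens in the required range — this forces roughly $k-1+q(k-1)<q^2-1$, i.e.\ $k\leqslant q$, which is exactly where the hypothesis $k\leqslant q$ comes from; the exceptional case $i=j=0$ contributes $-1$ and is precisely what the coordinate at infinity is there to cancel (this is why one appends an extra column, tuned so that the all-constant generator has Hermitian norm zero). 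The excluded value $k=q-1$ should be the single case where the arithmetic of $i+qj$ produces an unavoidable nonzero term that cannot be absorbed.

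The key steps, in order: (1) fix the evaluation set $\{\infty\}\cup\mathbb F_{q^2}$ and write down the GRS generator matrix with column multipliers $v_a$ and an extra infinity-column with entry $w$; (2) expand $\alpha(\text{row}_i,\text{row}_j)$ as $\sum_a v_a^{q+1}a^{i+qj} + w^{q+1}\delta_{i,k-1}\delta_{j,k-1}$ (or similar, depending on how the infinity point is handled in the GRS normalisation); (3) invoke the power-sum formula to see that for $k\leqslant q$ and $(i,j)\neq(0,0)$ all the finite sums vanish, provided $v_a^{q+1}$ is constant; (4) choose $w$ so that the one remaining $(0,0)$ term cancels, which requires $w^{q+1}=1$, again solvable; (5) conclude $D$ is Hermitian self-orthogonal, check it is MDS of the stated parameters (automatic for GRS codes, as any $k$ evaluation points give an invertible generalised Vandermonde minor), and apply Theorem~\ref{hermort}; (6) separately dispose of the cases where $k\leqslant q$ forces $n=q^2+1 \geqslant k+ (k+1)$ so the parameters make sense, and note $k=q$ is genuinely included (here $i+qj$ ranges up to $(q-1)(q+1)=q^2-1$, attained only at $i=j=q-1$, giving $\sum_a a^{q^2-1}=-1$, which must be cancelled — so for $k=q$ one may need a slightly different column-multiplier choice, and this is the delicate endpoint).

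**Main obstacle.** The crux is the choice of column multipliers $v_a$ (and the infinity weight $w$) so that \emph{every} off-diagonal and diagonal Hermitian product vanishes simultaneously across the whole range $0\leqslant i,j\leqslant k-1$; the power-sum vanishing handles most pairs for free once $v_a^{q+1}$ is constant, but the boundary terms — the $(0,0)$ pair always, and the $(q-1,q-1)$ pair when $k=q$ — are exactly where the construction is tight, and reconciling both at once (while staying inside $\mathbb F_{q^2}$, so that $(q+1)$-th roots of the needed constants exist) is the step that must be done carefully. I expect this is also where the excluded value $k=q-1$ reveals itself as genuinely obstructed rather than merely awkward, since there the relevant exponent $i+qj$ cannot be made to avoid a fatal nonzero residue.
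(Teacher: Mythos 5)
Your overall strategy (reduce via Theorem~\ref{hermort} and Lemma~\ref{hermitedual} to finding a Hermitian self-orthogonal $k$-dimensional generalised Reed--Solomon code of length $q^2+1$, evaluate at all of ${\mathbb F}_{q^2}$ plus a coordinate at infinity, and kill the Hermitian products with the power-sum identity) is the same as the paper's. But the concrete construction you propose fails for every $k<q$, and that is where the real content of the proof lies. With $v_a^{q+1}$ constant, \emph{every} finite sum $\sum_{a}v_a^{q+1}a^{i+qj}$ with $0\leqslant i,j\leqslant k-1$ vanishes when $k\leqslant q-1$; note in particular that the $(0,0)$ term is $c\sum_a a^0=cq^2=0$, not $-1$ --- the value $-1$ only ever appears at exponent $q^2-1$. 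Meanwhile the coordinate at infinity contributes $w^{q+1}f_{k-1}g_{k-1}^q$, i.e.\ it is supported on the $(k-1,k-1)$ pair, not the $(0,0)$ pair, so it cannot cancel a constant-term defect even formally; and since nothing in the finite part produces a compensating $(k-1,k-1)$ term, taking $f=g=X^{k-1}$ forces $w=0$ and you are left with a length-$q^2$ code rather than $q^2+1$. Your step (4) therefore collapses for all $k\leqslant q-1$.

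The missing idea is the correct choice of non-constant column multipliers. The paper takes $v_a=h(a)$ where $h$ is a \emph{monic polynomial of degree $q-k$ with no roots in ${\mathbb F}_{q^2}$} ($h=1$ when $k=q$, and $h$ irreducible when $k\leqslant q-2$). Then $h(t)^{q+1}$ has degree exactly $(q-k)(q+1)$ with leading coefficient $1$, so the exponents $i+jq+m$ arising in $\sum_t h(t)^{q+1}f(t)g(t)^q$ run up to $(k-1)(q+1)+(q-k)(q+1)=q^2-1$ and attain $q^2-1$ only at $i=j=k-1$, $m=(q-k)(q+1)$; this single surviving term equals $-f_{k-1}g_{k-1}^q$ and exactly cancels the infinity coordinate. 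This also pins down the exclusion $k=q-1$, which your sketch leaves as a vague ``unavoidable nonzero residue'': there $h$ would have to be a degree-one polynomial without roots in ${\mathbb F}_{q^2}$, which does not exist. So the obstacle you flag at the end is not a delicate boundary case to be patched; it is the central mechanism of the construction, and it is absent from your argument.
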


\begin{proof}
For all $k \leqslant q$ ,where $k \neq q-1$, we will construct a $[q^2+1,k,q^2+2-k]_{q^2}$ MDS code $D$ such that $D \leqslant D^{\perp_h}$. Lemma~\ref{hermitedual} implies that $C=D^{\perp_h}$ is a $[q^2+1,q^2+1-k,k+1]_{q^2}$ linear MDS code such that $C^{\perp_h} \leqslant C$. Theorem~\ref{hermort} then implies that there exists a $ [\![ q^2+1,q^2+1-2k,k+1 ]\!] _q$ quantum MDS code.

Denote by $\{a_1,\ldots,a_{q^2}\}$ the elements of ${\mathbb F}_{q^2}$.

Let $h(X)$ be a monic polynomial of ${\mathbb F}_{q^2}[X]$ of degree $q-k$ such that $h(a_j) \neq 0$ for all $j$. If $k=q$ then we can take $h(X)=1$. If $k \leqslant q-2$ then we can take $h(X)$ to be an  irreducible polynomial in ${\mathbb F}_{q^2}[X]$. 


Define
$$
D=\{(h(a_1)f(a_1),\ldots,h(a_{q^2})f(a_{q^2}),f_{k-1}) \ | \ f \in {\mathbb F}_{q^2}[X],\ \deg f \leqslant k-1\},
$$
where $f_i$ denotes the coefficient of $X^i$ in $f(X)$.

Firstly, we prove that $D$ is an MDS code. Observe that $D$ is a $[q^2+1,k,d]_{q^2}$ linear code, so we have to prove that $d=q^2+2-k$.

Consider $u$ and $v$ two codewords of $D$ given respectively by polynomials $f$ and $g$ of degree at most $k-1$.

If $f_{k-1} \neq g_{k-1}$ and $u$ and $v$ agree in the coordinate indexed by $a_i$ then $a_i$ is a zero of $h(X)(f(X)-g(X))$. Since $(f-g)(X)$ has at most $k-1$ zeros and $h(a_i)\neq0$, the codewords $u$ and $v$ agree in at most $k-1$ coordinates.

If $f_{k-1}= g_{k-1}$ then $(f-g)(X)$ has degree at most $k-2$ and therefore has at most $k-2$ zeros. Thus, the codewords $u$ and $v$ agree in at most $k-1$ coordinates.

Hence, the minimum distance of $D$ is at least $q^2+1-(k-1)$ which attains the Singleton bound. 

Let $r_m$ denote the coefficient of $X^m$ in $h(X)^{q+1}$. Then
$$
\alpha(u,v)=f_{k-1}g_{k-1}^q+\sum_{t \in {\mathbb F}_{q^2}} h(t)^{q+1}f(t)g(t)^q=f_{k-1}g_{k-1}^q+ \sum_{t \in {\mathbb F}_{q^2}} \sum_{m=0}^{(q-k)(q+1)} \sum_{i,j=0}^{k-1} f_ig_j^q r_m  t^{i+jq+m}
$$
$$
=f_{k-1}g_{k-1}^q+\sum_{m=0}^{(q-k)(q+1)} \sum_{i,j=0}^{k-1}f_ig_j^q r_m  \sum_{t \in {\mathbb F}_{q^2}}t^{i+m+jq}
$$
Since $\sum_{t \in {\mathbb F}_{q^2}} t^i=0$ for all $i=0,\ldots,q^2-2$ and $\sum_{t \in {\mathbb F}_{q^2}} t^{q^2-1}=-1$, we have that $\alpha(u,v)=0$. Note that $r_{(q-k)(q+1)}=1$, since $h$ is monic.

Thus, $D \leqslant D^{\perp_h}$, as required.
\end{proof}

The codes constructed in Theorem~\ref{GRSconstr} are examples of generalised Reed-Solomon codes. A generalised Reed-Solomon code over ${\mathbb F}_{q^2}$ is a code
$$
D=\{(c_1f(a_1),\ldots,c_{q^2}f(a_{q^2}),c_{q^2+1}f_{k-1}) \ | \ f \in {\mathbb F}_{q^2}[X],\ \deg f \leqslant k-1\},
$$
where $c_1,\ldots,c_{q^2+1}$ are elements of ${\mathbb F}_{q^2}$. In Theorem~\ref{GRSconstr}, we proved that there are generalised Reed-Solomon codes over ${\mathbb F}_{q^2}$ which are contained in their Hermitian duals in which $c_i=h(a_i)$ for some polynomial $h(X)$. However, for larger dimensions generalised Reed-Solomon codes are not contained in their Hermitian duals, as we shall now prove.

\begin{theorem} \label{noGRS}
If $k \geqslant q+1$ then a $k$-dimensional generalised Reed-Solomon code over ${\mathbb F}_{q^2}$ is not contained in its Hermitian dual.
\end{theorem}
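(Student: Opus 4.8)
The plan is to mirror the computation in the proof of Theorem~\ref{GRSconstr}, but track which monomial sums over ${\mathbb F}_{q^2}$ fail to vanish when $k\geqslant q+1$. Write a general generalised Reed-Solomon code as
$$
D=\{(c_1f(a_1),\ldots,c_{q^2}f(a_{q^2}),c_{q^2+1}f_{k-1}) \ | \ f \in {\mathbb F}_{q^2}[X],\ \deg f \leqslant k-1\},
$$
though the argument should be carried out for the genuinely general situation: an evaluation code on \emph{any} set of points (length $n$, possibly without the extra $f_{k-1}$ coordinate), since the theorem asserts non-containment for GRS codes of any length. For $u,v\in D$ coming from polynomials $f,g$ of degree at most $k-1$, the Hermitian inner product is
$$
\alpha(u,v)=\varepsilon\, f_{k-1}g_{k-1}^q+\sum_{i=1}^{n}c_i^{q+1}f(a_i)g(a_i)^q,
$$
where $\varepsilon\in\{0,1\}$ records whether the last special coordinate is present. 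Expanding $f(X)g(X)^q=\sum_{i,j=0}^{k-1}f_ig_j^q X^{i+jq}$, the condition $D\leqslant D^{\perp_h}$ amounts to a system of bilinear equations in the coefficients $f_i,g_j$: for the ``diagonal'' choice $f=g=X^{i}$ we must have
$$
\sum_{\ell=1}^n c_\ell^{q+1}a_\ell^{i(q+1)}=0 \quad\text{(plus the }\varepsilon\text{ term when }i=k-1),
$$
and for $f=X^i,\ g=X^j$ with $i\neq j$ we must have $\sum_\ell c_\ell^{q+1}a_\ell^{i+jq}=0$.

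The key step is to exploit the exponent $i+jq$ with $i,j\in\{0,\dots,k-1\}$ and $k-1\geqslant q$. Set $b_\ell=c_\ell^{q+1}$, which lies in ${\mathbb F}_q$ since it is a norm. Consider the two equations coming from $(i,j)=(q,0)$ and $(i,j)=(0,1)$: they read $\sum_\ell b_\ell a_\ell^{q}=0$ and $\sum_\ell b_\ell a_\ell^{q}=0$ — wait, more usefully, compare $(i,j)=(q,0)$, giving $\sum_\ell b_\ell a_\ell^{q}=0$, with $(i,j)=(1,0)$... the real leverage is that when $k-1\geqslant q$ the set of available exponents $\{i+jq : 0\le i,j\le k-1\}$ contains both $q$ (from $i=q,j=0$, using $k\ge q+1$) and $1\cdot q = q$ again from $i=0,j=1$; and it contains $0$ in two ways, $i=j=0$ and $i=q,j=?$. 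I would instead pin down the clean contradiction: the exponents $q+1$ (from $i=1,j=1$) and also $q+1$ from $i=q+1$?—no, $i\le k-1$. The cleanest route: from $f=g=1$ we get $\sum_\ell b_\ell=0$; from $f=X,g=1$ we get $\sum_\ell b_\ell a_\ell=0$, and from $f=1,g=X$, $\sum_\ell b_\ell a_\ell^q=0$; continuing, for every pair $(i,j)$ with $0\le i,j\le k-1$ we get $\sum_\ell b_\ell a_\ell^{i+jq}=0$. Because $k-1\ge q$, we may take $i$ ranging over $0,\dots,q$ and $j$ over $0,\dots,q$, so $i+jq$ ranges over all of $0,1,\dots,q^2+q$; in particular we obtain $\sum_\ell b_\ell a_\ell^{m}=0$ for \emph{every} $m$ in $\{0,1,\dots,q^2-1\}$. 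Treating $(b_\ell)$ as a vector, this says $(b_1,\dots,b_n)$ is orthogonal to the rows of the full $q^2\times n$ Vandermonde-type matrix with entries $a_\ell^m$, $m=0,\dots,q^2-1$; since the $a_\ell$ are distinct and $n\le q^2$, that matrix has rank $n$, forcing $b_\ell=0$ for all $\ell$, contradicting $c_\ell\neq 0$ (GRS codes require nonzero column multipliers). When the extra coordinate $c_{q^2+1}f_{k-1}$ is present one extra care is needed for the equations involving $i=k-1$ or $j=k-1$, but one simply avoids those indices: since $k-1\ge q$, the sub-range $i,j\in\{0,\dots,q\}$ with $\max(i,j)\le q< k-1$ already supplies exponents covering $\{0,\dots,q^2-1\}$ without ever touching the $\varepsilon$-term, except possibly when $k-1\le q$, which is excluded.

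The main obstacle I anticipate is handling the boundary exponent $q^2-1$ and the interaction with the special coordinate cleanly: one must verify that the exponent $q^2-1$ is actually realised as $i+jq$ with $i,j$ strictly below $k-1$ (so that the $f_{k-1}g_{k-1}^q$ term never contaminates the relevant equation), and that the remaining exponents $0,\dots,q^2-2$ are likewise realised with $i,j<k-1$; both follow from $k-1\ge q$ by writing $m=i+jq$ with $0\le i,j\le q$ and noting $q<k-1$ is false only when $k=q+1$, in which case $k-1=q$ and one must instead argue that the \emph{single} equation with $(i,j)=(k-1,k-1)$ is the only one affected, and the contradiction can still be extracted from the others. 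Once that bookkeeping is done, the linear-algebra conclusion (a nonzero norm-vector annihilating a full-rank Vandermonde matrix) is immediate, and the theorem follows.
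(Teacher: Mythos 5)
Your argument is correct and follows essentially the same route as the paper: substitute the monomials $f=X^i$, $g=X^j$ with $i,j\leqslant q-1\leqslant k-2$ (so the special coordinate and the $f_{k-1}g_{k-1}^q$ term never interfere) to obtain the power-sum conditions $\sum_\ell c_\ell^{q+1}a_\ell^{i+jq}=0$ for all exponents $i+jq$, and then invoke Vandermonde rank. The only difference is the endgame: the paper uses the exponents $1,\ldots,q^2-2$ to pin the vector $(c_\ell^{q+1})$ down to a multiple of the all-one vector and then derives the contradiction from the exponent-$0$ and exponent-$(q^2-1)$ equations, whereas you use all exponents $0,\ldots,q^2-1$ at once and full column rank of the resulting $q^2\times n$ matrix to force every $c_\ell^{q+1}$ to vanish --- a slightly more direct finish that also treats arbitrary length $n\leqslant q^2$ without first extending to the full-length code by allowing zero multipliers.
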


\begin{proof}
Let $A=\{a_1,\ldots,a_{q^2}\}$ be the set of elements of ${\mathbb F}_{q^2}$, where $a_{q^2}=0$. Let $D$ be a generalised Reed-Solomon code over ${\mathbb F}_{q^2}$ which is contained in its Hermitian dual.

Since $D$ is a generalised Reed-Solomon code, there are $v_1,\ldots,v_n$, elements of ${\mathbb F}_{q^2}$, such that
$$
D=\{(v_1f(a_1),\ldots,v_{q^2}f(a_{q^2}),c_f) \ | \ f \in {\mathbb F}_{q^2}[X],\ \deg f \leqslant k-1\},
$$
where $c_f$ is the coefficient of $X^{k-1}$ of $f$ or possibly zero. We also allow some of the $v_i$ to be zero too, which would be equivalent to taking a shorter length generalised Reed-Solomon code.

Consider $u$ and $v$ two codewords of $D$ given respectively by polynomials $f$ and $g$ of degree at most $k-1$.

Since $D$ is contained in its Hermitian dual,
$$
c_fc_g^q+\sum_{i =1}^{q^2} v_i^{q+1} f(a_i)g(a_i)^q=0.
$$
Let $f_i$ denote the coefficient of $X^i$ in $f(X)$ and $g_m$ denote the coefficient of $X^m$ in $g(X)$. Then the above is
$$
c_fc_g^q+\sum_{i =1}^{q^2} \sum_{j=0}^{k-1} \sum_{m=0}^{k-1}v_i^{q+1} f_j g_m^q a_i^{mq+j}=0.
$$

For all $\ell=1,\ldots,q^2-2$, where $\ell=\ell_0+\ell_1q$, with $f(X)=X^{\ell_0}$ and $g(X)=X^{\ell_1}$, this implies that
$$
\sum_{i =1}^{q^2-1} v_i^{q+1} a_i^{\ell}=0.
$$
Considering this set of equations in matrix form $\mathrm{A}u$=0, where $\mathrm{A}=(b_{\ell i})$ is a $(q^2-2) \times (q^2-1)$ matrix given by $b_{\ell i}=a_i^{\ell}$ and where $u$ is the vector of ${\mathbb F}_{q}^{q^2-1}$ whose $i$-th coordinate  is $v_i^{q+1}$. 

The matrix $\mathrm{A}$ contains $(q^2-2) \times (q^2-2)$ submatrix which is a Vandermonde matrix, so has rank $q^2-2$. Therefore, the solution space has dimension one and is spanned by the all-one vector. This implies that $v_i^{q+1}=\lambda^{q+1} \neq 0$ for all $i=1,\ldots,{q^2-1}$, for some $\lambda \in {\mathbb F}_{q^2}$. 

Thus, we have that
\begin{equation} \label{hermsum}
c_fc_g^q+(v_{q^2}^{q+1}-\lambda^{q+1}) f_0 g_0^q-\lambda^{q+1}  \sum_{j,m : q^2-1|j+mq \neq 0}f_j g_m^q =0.
\end{equation}
Since $c_f=f_{k-1}$ or zero, with $f(X)=g(X)=1$, (\ref{hermsum}) implies $v_{q^2}^{q+1}=\lambda^{q+1}$. Then, with $f(X)=g(X)=X^{q-1}$, (\ref{hermsum}) implies $\lambda=0$, a contradiction.

\end{proof}

Since the dual of a generalised Reed-Solomon code is a generalised Reed-Solomon code \cite[Chapter 10]{MS1977}, Theorem~\ref{noGRS} tells us that if $k \geqslant q+1$ then there are no $[n,n-k,k+1]_{q^2}$ generalised Reed-Solomon codes which contain their Hermitian dual. Thus, we should look elsewhere if we want to construct quantum MDS codes via Theorem~\ref{hermort}, for $d \geqslant q+2$. This we will do in the next section.

\section{Linear codes of rate one half}

Let $\mathrm{M}$ be a $k \times k$ matrix with entries from ${\mathbb F}_q$ and let $\mathrm{I}_k$ denote the $k \times k$ identity matrix. For some non-zero $\lambda \in {\mathbb F}_q$, let
$$
\mathrm{G}
=(\lambda \mathrm{I}_k \ | \ \mathrm{M}).
$$
The subspace spanned by the rows of $\mathrm{G}$ is a $k$-dimensional linear code $C$ of length $2k$ over ${\mathbb F}_q$. In the following theorems we will determine the minimum distance of $C$ depending on certain hypotheses regarding $\mathrm{M}$. 

\begin{theorem} \label{nohyp}
If every $j \times (k-d+1+j)$ sub-matrix of $\mathrm{M}$ for $j=1,\ldots,d-1$ has rank $j$ then $C$ is a $[2k,k,\geqslant d]$ code.
\end{theorem} 

\begin{proof}
If $C$ does not have minimum distance at least $d$ then there are two distinct codewords $u$ and $v$ which differ in at most $d-1$ coordinates. Therefore, $u-v=(a_1,\ldots a_k)\mathrm{G}$ has at least $2k-d+1$ zeros. 

Let $S$ be the set of of columns of  $\mathrm{G}$ viewed as points of $\mathrm{PG}(k-1,q)$, the $(k-1)$-dimensional projective space over ${\mathbb F}_q$. Since $(a_1,\ldots , a_k)\mathrm{G}$ has at least $2k-d+1$ zeros, there is a subset $S'$ of $S$ of $2k-d+1$ points which are incident with the hyperplane $a_1X_1+\cdots+a_kX_k=0$.

For any subset $T$ of $S$, let $\mathrm{G}(T)$ denote the submatrix of $\mathrm{G}$ restricted to the columns of $T$. Observe that the rank of the matrix $\mathrm{G}(S')$ is at most $k-1$.

Suppose that $i$ of the points of $S'$ are in the canonical basis and let $S''$ be a subset of $S'$ consisting of the $2k-d+1-i$ points not in the canonical basis. Since $2k-d+1-i \leqslant k$, we have $i \geqslant k-d+1$. Since $\mathrm{G}(S')$ has rank at most $k-1$, the matrix $\mathrm{G}(S'')$ contains a sub-matrix of $\mathrm{M}$ which is a $(k-i) \times (2k-d+1-i)$ sub-matrix of rank at most $k-i-1$. The hypothesis then implies $k-i \geqslant d$ which implies $i \leqslant k-d$, a contradiction.
\end{proof}

\begin{theorem} \label{inversehyp}
Suppose that $\mathrm{M}$ is a non-singular $k \times k$ matrix. If every $j \times (k-d+1+j)$ sub-matrix of $\mathrm{M}$  and $\mathrm{M}^{-1}$ has rank $j$ for all $j=1,\ldots,\lfloor\frac{1}{2}(d-1)\rfloor$ then $C$ is a $[2k,k,\geqslant d]$ code.
\end{theorem}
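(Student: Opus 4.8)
The plan is to mimic the argument of Theorem~\ref{nohyp}, but to exploit the fact that $\mathrm{M}$ is non-singular so that a low-weight codeword can be described either by its "left half" or, after multiplying by $\mathrm{M}^{-1}$, by its "right half", whichever is cheaper. Concretely, suppose $C$ has a non-zero codeword $w=(a_1,\ldots,a_k)\mathrm{G}=(\lambda a \mid a\mathrm{M})$ of weight at most $d-1$, where $a=(a_1,\ldots,a_k)\neq 0$. Write $e$ for the weight of the left part $\lambda a$ (equivalently of $a$) and $f$ for the weight of the right part $a\mathrm{M}$, so $e+f\leqslant d-1$ and both $e,f\geqslant 1$ since $\mathrm{M}$ is invertible (if $a\mathrm{M}=0$ then $a=0$). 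Since $e+f\leqslant d-1$, at least one of $e,f$ is at most $\lfloor\frac12(d-1)\rfloor$; set $j$ to be this smaller value, so $1\leqslant j\leqslant\lfloor\frac12(d-1)\rfloor$.

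First I would treat the case where the left half is the lighter one, i.e. $e=j$. Then $a$ has exactly $j$ non-zero coordinates; let $J\subseteq\{1,\ldots,k\}$, $|J|=j$, index them, and let $a_J$ denote the corresponding non-zero subvector. The right half $a\mathrm{M}$ has weight $f\leqslant d-1-j$, hence vanishes on a set of at least $k-(d-1-j)=k-d+1+j$ coordinates; call this column set $K$, $|K|\geqslant k-d+1+j$. Restricting, $a_J\,\mathrm{M}(J,K)=0$, where $\mathrm{M}(J,K)$ is the $j\times|K|$ submatrix of $\mathrm{M}$ on rows $J$ and columns $K$. Since $a_J$ has full support, this says the $j$ rows of $\mathrm{M}(J,K)$ are linearly dependent, so $\mathrm{M}(J,K)$ has rank at most $j-1$. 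Taking any $(k-d+1+j)$-subset of $K$ gives a $j\times(k-d+1+j)$ submatrix of $\mathrm{M}$ of rank $<j$, contradicting the hypothesis on $\mathrm{M}$.

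The symmetric case $f=j$ is handled by passing to $\mathrm{M}^{-1}$. Put $b=a\mathrm{M}$, so $a=b\mathrm{M}^{-1}$ and the codeword is $(\lambda b\mathrm{M}^{-1}\mid b)$; now $b$ has weight $j\leqslant\lfloor\frac12(d-1)\rfloor$ and $b\mathrm{M}^{-1}$ (which equals $a$ up to the scalar $\lambda$) has weight $e\leqslant d-1-j$. Exactly the previous argument, with $\mathrm{M}$ replaced by $\mathrm{M}^{-1}$ and $a$ by $b$, produces a $j\times(k-d+1+j)$ submatrix of $\mathrm{M}^{-1}$ of rank $<j$, again contradicting the hypothesis. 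Either way we reach a contradiction, so $C$ has minimum distance at least $d$; since $\mathrm{G}$ has rank $k$ and length $2k$, $C$ is a $[2k,k,\geqslant d]$ code.

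The only real subtlety — the step I would be most careful about — is the bookkeeping that guarantees the lighter half has weight at most $\lfloor\frac12(d-1)\rfloor$ \emph{and} is strictly positive, so that $j$ lies in the range $1,\ldots,\lfloor\frac12(d-1)\rfloor$ for which the hypothesis is available; this is exactly where non-singularity of $\mathrm{M}$ is used, to rule out $e=0$ or $f=0$. Everything else is the same rank/hyperplane counting as in Theorem~\ref{nohyp}, just applied to $\mathrm{M}$ or $\mathrm{M}^{-1}$ according to which side is cheaper.
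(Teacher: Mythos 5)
Your proposal is correct and follows essentially the same route as the paper: the paper's proof also reduces to the rank argument of Theorem~\ref{nohyp} and uses multiplication by $\mathrm{M}^{-1}$ to pass to whichever half of the codeword is lighter, so that the deficient $j\times(k-d+1+j)$ submatrix arises for some $j\leqslant\lfloor\frac{1}{2}(d-1)\rfloor$. Your version simply writes out explicitly, in terms of the supports $J$ and $K$, what the paper phrases via the point set $S$ and hyperplane intersections.
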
 

\begin{proof}
In the proof of Theorem~\ref{nohyp}, we can assume that $i \geqslant k- \lfloor \frac{1}{2}(d-1) \rfloor$ by multiplying the matrix $\mathrm{G}$ by $\mathrm{M}^{-1}$. Multiplying by $\mathrm{M}^{-1}$ constitutes a change of basis but does not affect the geometry of the point set $S$ and in particular its intersection with hyperplanes. The hypothesis now implies $k-i > \lfloor\frac{1}{2}(d-1)\rfloor$, a contradiction.

\end{proof}

Let $\sigma$ be an automorphism of ${\mathbb F}_q$. For a matrix $\mathrm{A}=(a_{ij})$, we define $\mathrm{A}^{\sigma}=(a_{ij}^{\sigma})$ and $\mathrm{A}^{t}=(a_{ji})$

\begin{theorem} \label{inverseissigma}
Suppose that $\mathrm{M}$ is a non-singular $k \times k$ matrix and that $(M^{\sigma})^t=-\lambda^{\sigma+1} \mathrm{M}^{-1}$ for some automorphism $\sigma$ of ${\mathbb F}_q$ and for some non-zero $\lambda \in {\mathbb F}_{q}$. If every $j \times (k-d+1+j)$ sub-matrix of $\mathrm{M}$ and every $(k-d+1+j)  \times j$ sub-matrix of $\mathrm{M}$ has rank $j$ for all $j=1,\ldots,\lfloor\frac{1}{2}(d-1)\rfloor$ then $C$ is a $[2k,k,\geqslant d]$ code such that $C \leqslant C^{\perp_{\sigma}}$, where $\perp_{\sigma}$ is defined with respect to the sesqui-linear form
$$
\beta(u,v)=u_1v_1^{\sigma}+\cdots+u_{2k}v_{2k}^{\sigma}.
$$
\end{theorem}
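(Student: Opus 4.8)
The plan is to read the distance statement off Theorem~\ref{inversehyp} and to obtain the inclusion $C\leqslant C^{\perp_\sigma}$ from a one-line matrix identity after evaluating $\beta$ on the rows of $\mathrm{G}$.

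For the distance bound I would check that the hypotheses of Theorem~\ref{inversehyp} hold. That theorem requires, for each $j=1,\ldots,\lfloor\frac{1}{2}(d-1)\rfloor$, that every $j\times(k-d+1+j)$ submatrix of $\mathrm{M}$ and every $j\times(k-d+1+j)$ submatrix of $\mathrm{M}^{-1}$ has rank $j$. The condition on $\mathrm{M}$ is part of the present hypothesis. For $\mathrm{M}^{-1}$, the relation $(\mathrm{M}^{\sigma})^t=-\lambda^{\sigma+1}\mathrm{M}^{-1}$ gives $\mathrm{M}^{-1}=-\lambda^{-(\sigma+1)}(\mathrm{M}^{\sigma})^t$. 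Since the rank of a matrix is unchanged by multiplication by a non-zero scalar, by transposition, and by applying the automorphism $\sigma$ entrywise, a $j\times(k-d+1+j)$ submatrix of $\mathrm{M}^{-1}$ has rank $j$ exactly when the corresponding $(k-d+1+j)\times j$ submatrix of $\mathrm{M}$ has rank $j$, which is the remaining part of the hypothesis. Hence Theorem~\ref{inversehyp} applies and $C$ is a $[2k,k,\geqslant d]$ code.

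For $C\leqslant C^{\perp_\sigma}$ I would use that $\beta$ is linear in its first slot and $\sigma$-semilinear in its second, so it suffices to check $\beta(r_i,r_j)=0$ on the rows $r_1,\ldots,r_k$ of $\mathrm{G}$. Writing $r_i=(\lambda e_i\mid m_i)$, with $e_i$ the $i$-th standard basis vector and $m_i$ the $i$-th row of $\mathrm{M}$, a direct computation gives
$$
\beta(r_i,r_j)=\lambda\lambda^{\sigma}\delta_{ij}+\sum_{l=1}^{k}\mathrm{M}_{il}\mathrm{M}_{jl}^{\sigma}=\lambda^{\sigma+1}\delta_{ij}+\big(\mathrm{M}(\mathrm{M}^{\sigma})^t\big)_{ij}.
$$
The hypothesis gives $\mathrm{M}(\mathrm{M}^{\sigma})^t=-\lambda^{\sigma+1}\mathrm{M}\mathrm{M}^{-1}=-\lambda^{\sigma+1}\mathrm{I}_k$, so the two terms cancel and $\beta(r_i,r_j)=0$, whence $C\leqslant C^{\perp_\sigma}$.

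The only place calling for care is the index bookkeeping in the second paragraph: matching a $j\times(k-d+1+j)$ submatrix of $\mathrm{M}^{-1}$ with a scalar multiple of the $\sigma$-image of the transpose of a $(k-d+1+j)\times j$ submatrix of $\mathrm{M}$. Once that correspondence is pinned down, the distance statement is immediate from Theorem~\ref{inversehyp}, and the self-orthogonality is the routine matrix identity above, so I do not anticipate any substantive obstacle.
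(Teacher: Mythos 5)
Your proposal is correct and follows essentially the same route as the paper: the distance bound is reduced to Theorem~\ref{inversehyp} by transferring the rank hypotheses from the $(k-d+1+j)\times j$ submatrices of $\mathrm{M}$ to the $j\times(k-d+1+j)$ submatrices of $\mathrm{M}^{-1}$ via the relation $(\mathrm{M}^{\sigma})^t=-\lambda^{\sigma+1}\mathrm{M}^{-1}$, and the self-orthogonality is the same row-by-row computation (the paper writes it with $\mathrm{M}^{-1}=(b_{ij})$ in summation form, you in matrix form). If anything, your handling of the rank-transfer step is slightly more explicit than the paper's one-line observation.
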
 

\begin{proof}
Observe that for a sub-matrix $B$ of $\mathrm{A}$ there is a corresponding submatrix $\mathrm{B}'$ of $\mathrm{A}^{\sigma}$ (taking the same restriction to rows and columns) and that the rank of $\mathrm{B'}$ is equal to the rank of $\mathrm{B}$. That $C$ is a linear code of minimum distance $d$ now follows from Theorem~\ref{inversehyp}.

Suppose that $ \mathrm{M}^{-1}=(b_{ij})$.

If $u$ is the $i$-th row of $\mathrm{G}$ and $v$ is the $j$-th row of $\mathrm{G}$, $i \neq j$, then
$$
\beta(u,v)=\sum_{m=1}^k a_{im}a_{jm}^{\sigma}=-\lambda^{\sigma}\lambda \sum_{m=1}^k a_{im} b_{mj}=0.
$$
Meanwhile,
$$
\beta(u,u)=\lambda^{\sigma}\lambda+\sum_{m=1}^k a_{im}a_{im}^{\sigma}=\lambda^{\sigma}\lambda-\lambda^{\sigma}\lambda\sum_{m=1}^k a_{im}b_{mi}=0.
$$
Thus, $C \leqslant C^{\perp_{\sigma}}$.

\end{proof}

We are now in a position to prove the main theorem of this section.  Observe that we now replace $q$ by $q^2$ and the automorphism $\sigma$ will be $x \mapsto x^q$.

\begin{theorem} \label{2kquantum}
If there is a $k \times k$ non-singular matrix $\mathrm{M}$ with entries from ${\mathbb F}_{q^2}$ for which $(M^{\sigma})^t=-\lambda^{q+1} \mathrm{M}^{-1}$ for some non-zero $\lambda \in {\mathbb F}_{q^2}$, where $\sigma(x)=x^q$, and every $j \times j$ sub-matrix of $\mathrm{M}$ is non-singular for all $j=1,\ldots,\lfloor\frac{1}{2}k\rfloor$ then there is a $ [\![ 2k-r,r,k+1-r ]\!] _q$ quantum MDS code, for all $r=0,\ldots,k-1$.
\end{theorem}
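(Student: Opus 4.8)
The plan is to apply Theorem~\ref{inverseissigma} once, with $d=k+1$, to produce a single Hermitian self-dual MDS code of length $2k$, and then to obtain the remaining codes by puncturing it. Working over ${\mathbb F}_{q^2}$ with the automorphism $\sigma(x)=x^q$ and $d=k+1$, we have $\lfloor\frac{1}{2}(d-1)\rfloor=\lfloor\frac{1}{2}k\rfloor$ and $k-d+1+j=j$, so the two sub-matrix conditions in Theorem~\ref{inverseissigma} both reduce to the present hypothesis that every $j\times j$ sub-matrix of $\mathrm{M}$ is non-singular for $j=1,\dots,\lfloor\frac{1}{2}k\rfloor$; moreover $\lambda^{\sigma+1}=\lambda^{q+1}$, so the identity $(M^{\sigma})^t=-\lambda^{q+1}\mathrm{M}^{-1}$ is exactly what is required, and the form $\beta(u,v)=\sum_i u_iv_i^{q}$ is the Hermitian form $\alpha$. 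Hence Theorem~\ref{inverseissigma} gives that the row space $C$ of $\mathrm{G}=(\lambda\mathrm{I}_k\mid\mathrm{M})$ is a $[2k,k,\geqslant k+1]_{q^2}$ code with $C\leqslant C^{\perp_h}$. The Singleton bound forces the minimum distance to equal $k+1$, so $C$ is MDS, and since $\dim C^{\perp_h}=2k-k=k=\dim C$ we in fact have $C=C^{\perp_h}$.

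Next I would fix $r\in\{0,\dots,k-1\}$, choose any set $T$ of $r$ coordinate positions, and let $C'$ be the code obtained from $C$ by deleting the coordinates in $T$. Since $C$ is MDS of minimum distance $k+1>r$, no non-zero codeword of $C$ is supported inside $T$, so the map $C\to C'$ that deletes the coordinates in $T$ is injective and $\dim C'=k$; every non-zero codeword of $C'$ then has weight at least $k+1-r$, while the Singleton bound gives minimum distance at most $(2k-r)-k+1=k+1-r$. Thus $C'$ is a $[2k-r,k,k+1-r]_{q^2}$ MDS code, equivalently a $[2k-r,(2k-r)-(k-r),(k-r)+1]_{q^2}$ MDS code.

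The only step carrying real content is that $(C')^{\perp_h}\leqslant C'$. Given $v'\in(C')^{\perp_h}$, extend it to $\tilde v\in{\mathbb F}_{q^2}^{2k}$ by inserting zeros in the positions of $T$. For every $u\in C$, with $u'$ the vector obtained from $u$ by deleting the coordinates in $T$ (so $u'\in C'$), we have $\alpha(u,\tilde v)=\alpha(u',v')=0$ because the inserted coordinates are zero; hence $\tilde v\in C^{\perp_h}=C$. As $\tilde v$ vanishes on $T$, the vector obtained from it by deleting the coordinates in $T$, namely $v'$, lies in $C'$, so $(C')^{\perp_h}\leqslant C'$. Applying Theorem~\ref{hermort} to $C'$, with $n=2k-r$ and the role of $k$ played by $k-r$, now yields a $ [\![ 2k-r,(2k-r)-2(k-r),(k-r)+1 ]\!] _q= [\![ 2k-r,r,k+1-r ]\!] _q$ quantum MDS code, as required. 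The main obstacle is simply keeping the parameters straight through the two specialisations — $d=k+1$ in Theorem~\ref{inverseissigma} and $k\mapsto k-r$ in Theorem~\ref{hermort} — together with the short zero-extension argument just given, which is the one genuinely new observation.
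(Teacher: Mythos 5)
Your proof is correct, and while the first half coincides with the paper's (specialising Theorem~\ref{inverseissigma} to $d=k+1$, so that the sub-matrix conditions become the non-singularity of all $j\times j$ sub-matrices for $j\leqslant\lfloor\frac{1}{2}k\rfloor$, and concluding $C=C^{\perp_h}$ is a $[2k,k,k+1]_{q^2}$ MDS code), the second half takes a genuinely different route. The paper obtains the codes with $r\geqslant 1$ entirely at the quantum level: it invokes Rains' theorem that a pure $ [\![ n,k,d ]\!] _q$ code with $n,d\geqslant 2$ yields a pure $ [\![ n-1,k+1,d-1 ]\!] _q$ code, together with Rains' later result that every quantum MDS code is pure, and iterates. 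You instead stay classical: you puncture $C$ on an $r$-set $T$, check that the puncturing is injective and yields a $[2k-r,k,k+1-r]_{q^2}$ MDS code, and prove $(C')^{\perp_h}\leqslant C'$ via the zero-extension argument --- which is exactly the standard Delsarte-type identity that the dual of a punctured code is the shortening of the dual, applied with $C^{\perp_h}=C$ --- before reapplying Theorem~\ref{hermort} with $k$ replaced by $k-r$. Both arguments are sound and all your parameter bookkeeping checks out. Your version buys self-containedness: it avoids any appeal to purity of quantum MDS codes and uses only the classical machinery already present in Section~2. The paper's version is shorter given the cited results and has the advantage of applying to any pure quantum code, not only those arising from Hermitian self-orthogonal classical codes.
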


\begin{proof}
By Theorem~\ref{inverseissigma}, $C\leqslant C^{\perp_h}$ and $C$ is a $[2k,k,k+1]_{q^2}$ linear MDS code. Since $\dim C=\dim C^{\perp_h}$, we have that $C= C^{\perp_h}$, so Theorem~\ref{hermort} implies there exists a $ [\![ 2k,0,k+1 ]\!] _q$ quantum MDS code.

Rains \cite{Rains1998} showed that if there is a pure $ [\![ n,k,d ]\!] _q$ quantum code with $n,d \geqslant 2$ then there exists a pure $ [\![ n-1,k+1,d-1 ]\!] _q$ quantum code. Later, Rains \cite{Rains1999} proved that a quantum MDS code must be pure, so there exists a $ [\![ 2k-r,r,k+1-r ]\!] _q$ quantum MDS codes, if the hypothesis on $\mathrm{M}$ is satisfied, for all $r=0,\ldots,k-1$.
\end{proof}

\section{Circulant matrices}

A circulant matrix $\mathrm{M}$ is a matrix whose $(i+1)$-st row is a cyclic shift of the first row $i$ places to the right with wrap around. In other words,
$$
\mathrm{M}=\left(\begin{array}{cccccc}
x_1 & x_2 & \ldots & x_{k-1} & x_k \\
x_k & x_1 & \ldots & x_{k-2} & x_{k-1} \\
. & . &  \ddots & .  & . \\
. & . & . &   \ddots  & . \\
x_{2} & x_3 & \ldots & x_{k} & x_1 \\
\end{array} \right)
$$
for some $x=(x_1,\ldots,x_k)$.

A linear code of rate one half which is the row span over ${\mathbb F}_q$ of a matrix 
$$
\mathrm{G}
=(\lambda \mathrm{I}_k \ | \ \mathrm{M}).
$$
for some non-zero $\lambda \in {\mathbb F}_q$, is called a {\em doubly circulant code}. Such codes have been well studied, see for example \cite{Beenker1980} and \cite{VR1985}.

\begin{theorem} \label{sigmacirculant}
Let $\mathrm{M}$ be a $k \times k$ non-singular circulant matrix with entries from ${\mathbb F}_{q^2}$ whose first row is $x=(x_1,\ldots,x_k)$. Then $(\mathrm{M}^{\sigma})^t=-\lambda^{q+1}\mathrm{M}^{-1}$ for some non-zero $\lambda \in {\mathbb F}_{q^2}$, where $\sigma(x)=x^q$ if and only if $H_m(x)=0$  for all $m =1,\ldots,\lfloor \frac{1}{2} k \rfloor$, where
$$
H_m(x)=\sum_{i=1}^k x_i x_{i+m}^q
$$ 
and the indices are read modulo $k$.
\end{theorem}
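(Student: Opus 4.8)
The plan is to work inside the algebra of $k \times k$ circulant matrices over $\mathbb{F}_{q^2}$. Writing $P$ for the cyclic shift matrix $\mathrm{circ}(0,1,0,\ldots,0)$, every circulant matrix is a polynomial $\sum_{i=1}^k x_i P^{i-1}$ in $P$; moreover $P^k = \mathrm{I}_k$, $P^t = P^{-1} = P^{k-1}$, and the matrices $\mathrm{I}_k, P, \ldots, P^{k-1}$ are linearly independent over $\mathbb{F}_{q^2}$. First I would record the easy facts that $\mathrm{M}^{\sigma} = \sum_i x_i^q P^{i-1}$ is again circulant, that $(\mathrm{M}^{\sigma})^t = \sum_i x_i^q P^{1-i}$ (indices read modulo $k$), and hence that the product $(\mathrm{M}^{\sigma})^t \mathrm{M}$ is again circulant.

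Next I would compute this product directly. Collecting terms according to the exponent of $P$,
$$
(\mathrm{M}^{\sigma})^t \mathrm{M} = \sum_{i,j=1}^k x_i^q x_j\, P^{j-i} = \sum_{m=0}^{k-1} \widetilde{H}_m(x)\, P^m, \qquad \widetilde{H}_m(x) := \sum_{i=1}^k x_i^q x_{i+m},
$$
with all indices modulo $k$. Multiplying the stated identity on the right by $\mathrm{M}$, the condition $(\mathrm{M}^{\sigma})^t = -\lambda^{q+1}\mathrm{M}^{-1}$ is equivalent to $(\mathrm{M}^{\sigma})^t \mathrm{M} = -\lambda^{q+1}\mathrm{I}_k$, and by the linear independence of $\mathrm{I}_k, P, \ldots, P^{k-1}$ this says precisely that $\widetilde{H}_0(x) = -\lambda^{q+1}$ and $\widetilde{H}_m(x) = 0$ for $m = 1, \ldots, k-1$. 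Applying the Frobenius involution $y \mapsto y^q$ of $\mathbb{F}_{q^2}$ gives $\widetilde{H}_m(x)^q = \sum_i x_i x_{i+m}^q = H_m(x)$, so $\widetilde{H}_m(x) = 0 \iff H_m(x) = 0$; and the substitution $i \mapsto i+m$ shows $H_{k-m}(x) = H_m(x)^q$, so the vanishing of $H_m(x)$ for all $m = 1, \ldots, k-1$ is equivalent to its vanishing for $m = 1, \ldots, \lfloor \frac{1}{2}k \rfloor$.

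This settles the forward implication at once. For the converse, suppose $H_m(x) = 0$ for $m = 1, \ldots, \lfloor \frac{1}{2}k \rfloor$; then $\widetilde{H}_m(x) = 0$ for all $m = 1, \ldots, k-1$, so $(\mathrm{M}^{\sigma})^t \mathrm{M} = \widetilde{H}_0(x)\,\mathrm{I}_k$ with $\widetilde{H}_0(x) = \sum_i x_i^{q+1}$. Each $x_i^{q+1}$ is the norm of $x_i$ and so lies in $\mathbb{F}_q$; since $\mathrm{M}$, and hence $(\mathrm{M}^{\sigma})^t$, is non-singular, $\widetilde{H}_0(x) \neq 0$; and since the norm map $y \mapsto y^{q+1}$ from $\mathbb{F}_{q^2}^{*}$ onto $\mathbb{F}_q^{*}$ is surjective, there is a non-zero $\lambda \in \mathbb{F}_{q^2}$ with $\lambda^{q+1} = -\widetilde{H}_0(x)$. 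Then $(\mathrm{M}^{\sigma})^t = \widetilde{H}_0(x)\,\mathrm{M}^{-1} = -\lambda^{q+1}\mathrm{M}^{-1}$, as required.

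The only delicate points are bookkeeping ones — keeping the cyclic indices and the two conjugate forms $\widetilde{H}_m$ and $H_m$ straight, and reducing the range $m = 1, \ldots, k-1$ to $m = 1, \ldots, \lfloor \frac{1}{2}k \rfloor$ via $H_{k-m}(x) = H_m(x)^q$ — together with the observation in the converse that non-singularity of $\mathrm{M}$ is exactly what forces $\widetilde{H}_0(x) \neq 0$, so that a non-zero $\lambda$ exists. I do not expect any serious obstacle beyond these.
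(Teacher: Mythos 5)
Your proof is correct and follows essentially the same route as the paper: both reduce the matrix identity to the vanishing of the coefficients $H_m(x)$ for $m\neq 0$ (the paper by computing the $(i,r)$ entry of $\mathrm{M}(\mathrm{M}^{\sigma})^t$ directly as $H_{i-r}(x)$, you by expanding in the shift-matrix basis of the circulant algebra) and both halve the range of $m$ via $H_{-m}(x)=H_m(x)^q$. If anything you are slightly more careful than the paper, which writes $\lambda^{q+1}=H_0(x)$ where the stated identity requires $\lambda^{q+1}=-H_0(x)$ and leaves the surjectivity of the norm map onto $\mathbb{F}_q$ implicit.
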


\begin{proof}
If $\mathrm{M}=(a_{ij})$ then $a_{ij}=x_{j-i+1}$, since $\mathrm{M}$ is circulant whose first row is $x=(x_1,\ldots,x_k)$.  

The scalar product of the $i$-th row of $\mathrm{M}$ with the $r$-th column of $(\mathrm{M}^{\sigma})^t$ is
$$
\sum_{j=1}^k a_{ij} a_{rj}^q=\sum_{j=1}^k x_{j-i+1}x_{j-r+1}^{q}=\sum_{j=1}^k x_j x_{j+i-r}^q=H_{i-r}(x).
$$
Observe that $H_{0}(x) \in {\mathbb F}_q$, so there is a $\lambda \in {\mathbb F}_{q^2}$ such that $\lambda^{q+1}= H_{0}(x)$. Moreover, since $\mathrm{M}$ is non-singular, $\lambda \neq 0$.
$$
H_m(x)^q=\sum_{i=1}^k x_i^q x_{i+m}=\sum_{i=1}^k x_{i-m}^q x_{i}=H_{-m},
$$
so it suffices that $H_m(x)=0$  for $m =1,\ldots,\lfloor \frac{1}{2} k \rfloor$.
\end{proof}

\begin{corollary} \label{maincor}
Let $\mathrm{M}$ be a $k \times k$ non-singular circulant matrix with entries from ${\mathbb F}_{q^2}$ whose first row is $x=(x_1,\ldots,x_k)$. If, for all $m =1,\ldots,\lfloor \frac{1}{2} k \rfloor$, all $m\times m$ submatrices of $\mathrm{M}$ are non-singular,
$$
H_m(x)=\sum_{i=1}^k x_i x_{i+m}^q=0
$$
and
$$
\sum_{i=1}^k x_{i}^{q+1} \neq 0
$$
then there exists a $ [\![ 2k-r,r,k+1-r ]\!] _q$ quantum MDS code, for all $r=0,\ldots,k-1$.
\end{corollary}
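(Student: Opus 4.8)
The plan is to obtain this corollary as an immediate combination of Theorem~\ref{sigmacirculant} and Theorem~\ref{2kquantum}: the only content is checking that the three listed hypotheses on the circulant matrix $\mathrm{M}$ are precisely what those two theorems require, assembled in the right order.

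First I would apply Theorem~\ref{sigmacirculant}. Since $\mathrm{M}$ is a non-singular $k\times k$ circulant matrix over ${\mathbb F}_{q^2}$ with first row $x=(x_1,\ldots,x_k)$ and, by the second hypothesis, $H_m(x)=0$ for all $m=1,\ldots,\lfloor\frac{1}{2}k\rfloor$, that theorem gives $(\mathrm{M}^{\sigma})^t=-\lambda^{q+1}\mathrm{M}^{-1}$ for some non-zero $\lambda\in{\mathbb F}_{q^2}$, where $\sigma(x)=x^q$. Here $\lambda^{q+1}$ is, up to sign, $H_0(x)=\sum_{i=1}^k x_i^{q+1}$, so the third hypothesis $\sum_{i=1}^k x_i^{q+1}\neq 0$ is exactly what guarantees $\lambda\neq 0$, and hence that we may legitimately feed this matrix identity into Theorem~\ref{2kquantum}. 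One may note in passing that $H_0(x)\neq 0$ in fact already follows from the non-singularity of $\mathrm{M}$, since once the off-diagonal $H_m$ vanish one has $\mathrm{M}(\mathrm{M}^{\sigma})^t=H_0(x)\mathrm{I}_k$; so the third hypothesis is stated only for transparency.

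Next I would invoke Theorem~\ref{2kquantum} with this same matrix $\mathrm{M}$. Its hypotheses are that $\mathrm{M}$ is a non-singular $k\times k$ matrix over ${\mathbb F}_{q^2}$ with $(\mathrm{M}^{\sigma})^t=-\lambda^{q+1}\mathrm{M}^{-1}$ for some non-zero $\lambda$, which we have just established, and that every $j\times j$ sub-matrix of $\mathrm{M}$ is non-singular for all $j=1,\ldots,\lfloor\frac{1}{2}k\rfloor$, which is the first hypothesis of the corollary. The conclusion of Theorem~\ref{2kquantum} is precisely the existence of a $ [\![ 2k-r,r,k+1-r ]\!] _q$ quantum MDS code for all $r=0,\ldots,k-1$, which is what is to be proved.

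There is no real obstacle here: the work is entirely in matching notation and checking that the circulant structure is exactly the device that converts the matrix identity $(\mathrm{M}^{\sigma})^t=-\lambda^{q+1}\mathrm{M}^{-1}$ into the scalar conditions $H_m(x)=0$. If anything needs care it is only confirming that the range $m=1,\ldots,\lfloor\frac{1}{2}k\rfloor$ appearing in the submatrix-non-singularity condition and in the vanishing of the $H_m$ agrees with the ranges used in the two cited theorems, which it does.
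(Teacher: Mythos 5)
Your proposal is correct and is essentially the paper's own proof, which simply cites Theorem~\ref{sigmacirculant} followed by Theorem~\ref{2kquantum}; your matching of hypotheses is exactly the intended content. Your side remark that $H_0(x)\neq 0$ already follows from non-singularity once the off-diagonal $H_m$ vanish (via $\mathrm{M}(\mathrm{M}^{\sigma})^t=H_0(x)\mathrm{I}_k$) is a valid observation consistent with the proof of Theorem~\ref{sigmacirculant}.
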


\begin{proof}
This follows from Theorem~\ref{2kquantum} and Theorem~\ref{sigmacirculant}.
\end{proof}

\section{Computational results}

Recall that we are interested in constructing $ [\![ 2k,0,k+1 ]\!] _q$ quantum MDS codes with $k \geqslant q+1$. This then implies there are $ [\![ 2k-r,r,k+1-r ]\!] _q$ quantum MDS codes for all $r=0,\ldots,k-1$.


\underline{$k=5$ ($q \leqslant 4$).} 

A $ [\![ 10,0,6 ]\!] _q$ code exists for both $q=3$ and $q=4$, since in both cases there are $[10,5,6]_q$ codes which are equal to their Hermitian dual. These examples are due to Glynn \cite{Glynn1986} for $q=3$ and Grassl and Gulliver \cite{GG2008} for $q=4$.

For $q=3$, Corollary~\ref{maincor} applies if $x=(\epsilon^2, \epsilon^3, \epsilon^3, \epsilon^2, 1)$, where $\epsilon^2=\epsilon+1$. 

For $q=4$, Corollary~\ref{maincor} applies if $x=(\epsilon^2, \epsilon^{12}, \epsilon^{12}, \epsilon^2, 1)$, where $\epsilon^4=\epsilon+1$. 

\underline{$k=6$ ($q \leqslant 5$).} 

An exhaustive search reveals that there are no circulant matrices $\mathrm{M}$ for which the hypothesis of Corollary~\ref{maincor} is satisfied for $q=4$ or $q=5$.

There are examples for $q=7$, for example $x=(\epsilon^{21}, \epsilon^{44},  \epsilon^8, \epsilon^9, \epsilon^{12}, 1)$, where $\epsilon^2=\epsilon+4$, which are perhaps interesting in that they are not obtained by truncating a generalised Reed-Solomon code. This can be checked by calculating the dimension of the subspace of quadrics which are zero on the columns of $\mathrm{G}$. The dimension of the subspace of quadrics which are zero on the columns of a generator matrix of a generalised Reed-Solomon code is ${k-1 \choose 2}$, see Glynn \cite[Theorem 3.3]{Glynn1994}. If the $k$-dimensional MDS code has length at least $2k+1$ then the converse of this statement is also true, see \cite[Theorem 6]{BP2020}. In all the examples in which $\mathrm{M}$ satisfies the hypothesis of Corollary~\ref{maincor}, the dimension is $9$. The existence of a $ [\![ 12,0,7 ]\!] _7$ quantum MDS code was already known, see \cite{GR2015}.

\underline{$k=7$ ($q \leqslant 5$).} 

An exhaustive search reveals that there are no matrices $\mathrm{M}$ for which the hypothesis of Corollary~\ref{maincor} is satisfied for $q=4$.

For $q=5$, Corollary~\ref{maincor} applies if $x=(\epsilon^{10}, \epsilon^{10},1, \epsilon^6, \epsilon^3, \epsilon^6, 1)$, where $\epsilon^2=\epsilon+3$. 

Thus, there is a $ [\![ 14,0,8 ]\!] _5$ quantum MDS code. This was not previously known.

There are examples for $q=7$, for example $x=(\epsilon^{4}, \epsilon^{40},  \epsilon^{45}, 1,1,\epsilon^{45}, \epsilon^{40})$, where $\epsilon^2=\epsilon+4$. Again, as in the case $k=6$, these are perhaps interesting because they cannot be obtained from truncating a generalised Reed-Solomon code. The existence of a $ [\![ 14,0,8 ]\!] _7$ quantum MDS code was already known, see \cite{GR2015}.

\underline{$k=8$ ($q \leqslant 7$).} 

An exhaustive search was too large to perform and no examples were found under the assumption $x_7=x_6$.

An exhaustive search was too large to perform for $k \geqslant 9$. However, under the assumption $x_j=x_{k+2-j}$ further exhaustive searches were executed. This is a natural assumption to make since it is equivalent to assuming $H_m=H_{-m}$. In other words, we assume that $H_m$ is a Hermitian surface, for all $m$. This is equivalent to assuming that $\mathrm{M}$ is symmetric.

Observe that under the assumption that $\mathrm{M}$ is symmetric
 we are obliged to restrict our attention to $k$ odd, since $\mathrm{M}=(a_{ij})$ contains a submatrix
 $$
 \left(\begin{array}{cc} a_{1r} & a_{1,k+2-r} \\ a_{\frac{1}{2}k+1,r} & a_{\frac{1}{2}k+1,k+2-r} \end{array} \right)
=\left(\begin{array}{cc} x_{r} & x_r \\ x_{r+\frac{1}{2}k} & x_{r+\frac{1}{2}k} \end{array} \right),
 $$
 which has zero determinant.

\underline{$k=9$ ($q \leqslant 8$).} 

An exhaustive search for symmetric matrices satisfying the hypothesis of Corollary~\ref{maincor} reveals that there are examples for $q=5$ and $q=7$ and none for $q=8$.

For $q=5$, we have $x=(1, \epsilon^{14}, \epsilon^{21},\epsilon^{16}, \epsilon^{17},\epsilon^{17}, \epsilon^{16}, \epsilon^{21}, \epsilon^{14})$, where $\epsilon^2=\epsilon+3$. 


For $q=7$, we have $x=(1,  \epsilon^{12}, \epsilon^2, \epsilon^{17}, \epsilon^{13}, \epsilon^{13}, \epsilon^{17}, 
  \epsilon^2, \epsilon^{12})$, where $\epsilon^2=\epsilon+4$.
  
Thus, there is an $ [\![ 18,0,10 ]\!] _5$ and an $ [\![ 18,0,10 ]\!] _7$ quantum MDS code. These were not previously known.

\section{Acknowledgments}

The author thanks Felix Huber for many fruitful discussions about quantum codes and Markus Grassl for some helpful comments. The comments and suggestions made by the referees were very much appreciated.

\end{document}